\documentclass[conference]{IEEEtran}
\IEEEoverridecommandlockouts
% The preceding line is only needed to identify funding in the first footnote. If that is unneeded, please comment it out.
\usepackage{cite}
\usepackage{amsmath,amssymb,amsfonts}
\usepackage{graphicx}
\usepackage{textcomp}
\usepackage{xcolor}

\usepackage{colortbl}
\usepackage[normalem]{ulem}
%\useunder{\uline}{\ul}{}

%%%%%%%
\usepackage{algpseudocode}

\DeclareMathOperator*{\argmin}{arg\,min}
\usepackage[normalem]{ulem}
\useunder{\uline}{\ul}{}
\usepackage{graphicx}
\usepackage{subcaption}
\usepackage{mwe}
\usepackage{comment}
\usepackage{algpseudocode}
\usepackage{lipsum}
\usepackage[export]{adjustbox}
\usepackage{amsthm}
%\usepackage{algorithmic}

%%% package for theorem environment %%%
%\usepackage{amsthm}
%%%%%%%%%%%%%%%%%%%%%%%%%%%

%\algnewcommand\algorithmicforeach{\textbf{for each}}
%\algdef{S}[FOR]{ForEach}[1]{\algorithmicforeach\ #1\ \algorithmicdo}

% Example definitions.
% --------------------

%\usepackage{amsmath,amssymb,amsfonts}

%%%% Theorems and stuff %%%%
\newtheorem{proposition}{Proposition}
%%%%%%%%%%%%%%

%%%%% OMT %%%%%%%%%

%%%%%%%%%%%%%%%%%%%

%%%%% l1 and l2 %%%%%

%%%%%%%%%%%%%%%%%%%%%

\def\minwrt[#1]{\underset{#1}{\mathrm{minimize }}}
\def\maxwrt[#1]{\underset{#1}{\mathrm{maximize }}}
\def\argminwrt[#1]{\underset{#1}{\mathrm{arg min }}}

%\newcommand{\fe}[1]{{\color{red}{#1}}}

%\floatstyle{ruled}
%\newfloat{algorithm}{t}{lop}
%\floatname{algorithm}{Algorithm}

%\theoremstyle{remark}
%\newtheorem*{remark}{Remark}

%%%%% OT distance %%%%
\newcommand{\otdist}{S}
%%%%%%%%%%%%%%%%%%%%%

%%%% Bold font, other %%%%
\newcommand{\bh}{\mathbf{h}}
\newcommand{\bu}{\mathbf{u}}
\newcommand{\bv}{\mathbf{v}}
\newcommand{\bw}{\mathbf{w}}
\newcommand{\bC}{\mathbf{C}}
\newcommand{\bK}{\mathbf{K}}
\newcommand{\bM}{\mathbf{M}}
\newcommand{\blambda}{\boldsymbol{\lambda}}
\newcommand{\bmu}{\boldsymbol{\mu}}
\newcommand{\bxi}{\boldsymbol{\xi}}
\newcommand{\onevec}{\mathbf{1}}
\newcommand{\bnu}{\boldsymbol{\nu}}
%%%%%%%%%%%%%%%%%%%%%%%%

%%%%%% other variables %%%%
\newcommand{\bhgeo}{\mathbf{h}_{\mathrm{0}}}
%%%%%%%%%%%%%%%%%%%%%

%%%%% Fields %%%%%%%%%
\newcommand{\RR}{\mathbb{R}}
%%%%%%%%%%%%%%%%%%%%%

%%% Comments %%%%

%%%%%%%%%%%%%%%%

\def\BibTeX{{\rm B\kern-.05em{\sc i\kern-.025em b}\kern-.08em
    T\kern-.1667em\lower.7ex\hbox{E}\kern-.125emX}}
\begin{document}
\title{Room Impulse Response Estimation using Optimal Transport: Simulation-Informed Inference}

\author{\IEEEauthorblockN{David Sundström\IEEEauthorrefmark{2}, Anton Björkman\IEEEauthorrefmark{1},
Andreas Jakobsson\IEEEauthorrefmark{2}, and
Filip Elvander\IEEEauthorrefmark{1}}
\IEEEauthorblockA{\IEEEauthorrefmark{1}Dept. of Information and Communications Engineering, Aalto University, Finland\\
}
\IEEEauthorblockA{\IEEEauthorrefmark{2}Dept. of Mathematical Sciences, Lund University, Sweden
}
}

\maketitle

\begin{abstract}
The ability to accurately estimate room impulse responses (RIRs) is integral to many applications of spatial audio processing.
Regrettably, estimating the RIR using ambient signals, such as speech or music, remains a challenging problem due to, e.g., low signal-to-noise ratios, finite sample lengths, and poor spectral excitation.
Commonly, in order to improve the conditioning of the estimation problem, priors are placed on the amplitudes of the RIR.
Although serving as a regularizer, this type of prior is generally not useful when only approximate knowledge of the delay structure is available, which, for example, is the case when the prior is a simulated RIR from an approximation of the room geometry. 
In this work, we target the delay structure itself, constructing a prior based on the concept of optimal transport.
As illustrated using both simulated and measured data, the resulting method is able to beneficially incorporate information even from simple simulation models, displaying considerable robustness to perturbations in the assumed room dimensions and its temperature.
\end{abstract}

\begin{IEEEkeywords}
Room impulse response, spatial audio modelling, optimal transport
\end{IEEEkeywords}

\section{Introduction}\vspace{-1mm}
\label{sect:introduction}
Accurate and robust estimation of the room impulse response (RIR) is necessary for many forms of emerging spatial audio applications, including sound zones \cite{Lee2018}, spatial active noise control \cite{Koyama2021}, and rendering for virtual reality \cite{Kenzie2022}. Although the estimation problem is well studied for controlled settings, it remains a challenging problem to estimate an RIR using ambient signals, such as music or speech \cite{Naylor2011}.

Typically, the problem is aggravated by the presence of any movement in the observed sound source (see, e.g., \cite{Sundstrom2024ICASSP}), as well as the inherent characteristics of the often non-stationary source signal itself. In particular, short signal observation, poor spectral excitation, and low signal-to-noise ratio (SNR), makes the problem ill-conditioned. To counter this, multiple approaches to regularize the RIR estimation have been proposed.

In \cite{Waterschoot2008,Ljung2020}, the use of Tikhonov regularization for solving the inverse problem was presented, corresponding to the maximum likelihood estimator when using Gaussian priors on the amplitudes of the RIR. To exploit that the early part of an RIR is sparse under the idealistic assumption of specular reflections, \cite{Lin2006, Crocco2015, Benichoux2014} consider estimation of an RIR using variations of the Lasso regularization. 
Regrettably, reflections within a room has inherit frequency-dependent absorption and diffusion characteristics such that measured RIRs are typically not sparse. As a further alternative, low rank modelling of RIRs have also recently been proposed in \cite{Jalmby2021,Jalmby2023,Jalmby2023_b}. 

\begin{figure}
\centering
\begin{subfigure}{\linewidth}
    \centering
    \includegraphics[width = 7cm]{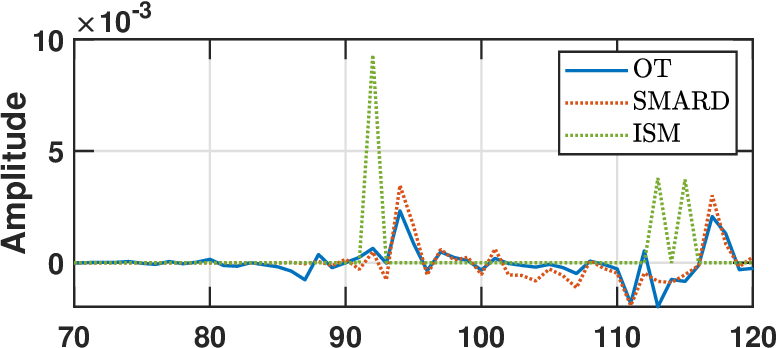}
    \caption{}\label{fig:illustration_ot}
\end{subfigure}

\medskip

\begin{subfigure}{\linewidth}
    \centering
    \includegraphics[width=7cm]{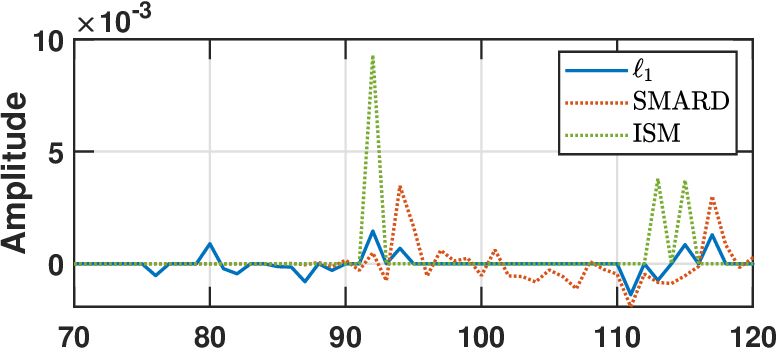}
    \caption{}\label{fig:illustration_l1}
\end{subfigure}
\caption{Illustration of how the proposed method allows energy from the simulated RIR to be transported along the support, which is not the case for $\ell_1$. }
\label{fig:illustration}\vspace{-3mm}
\end{figure}

In the noted regularization approaches, the considered RIR estimation problems consider the typical situation when only the input and output signals are observable. In contrast, we here also assume the availability of some {\em a priori} knowledge of a simulated (approximating) RIR, typically formed using a 3D model of the room geometry, based on, for example, point clouds from video data (see, e.g., \cite{Hartley2004,Li2023,Dai2020}). Prior works have suggested multiple ways to simulate such an RIR given a 3D model, including the image source method (ISM) \cite{Allen1979}, ray-tracing  methods \cite{Krokstad2015,Taylor2012}, and numerical methods such as the finite element method \cite{Ihlenburg1998}. 

However, any such simulated RIR can be expected to be subject to errors caused by imprecise knowledge of the surface reflection coefficients, as well as the actual spatial location of reflectors. Thus, this prior RIR will generally contain errors in its amplitudes as well as in its delay structure.

Although uncertainty in amplitude can be modelled using standard methods, such as Tikhonov regularization, perturbations of delays are not well described in these frameworks. As to allow for this type of uncertainty, we here propose to incorporate information contained in a prior RIR by means of an optimal transport (OT) formulation. Concerned with finding the most efficient way to transform one non-negative distribution into another \cite{Villani2007}, OT has successfully been used in signal processing applications such as spectral estimation \cite{Georgiou2009, Elvander2020b,Elvander2020a,Elvander23_icassp, haaslerE24_css_arxiv}, imaging \cite{Karlsson2017}, as well as recently for RIR tracking and interpolation \cite{Sundstrom2024ICASSP,Sundstrom2023,Geldert2023}. Herein, we use OT to model shifts in the temporal energy distribution of an RIR, allowing the model to describe both perturbations in amplitudes and delays. Furthermore, we present an efficient numerical solver using a proximal splitting approach, implementing the proposed estimator.

\section{Signal model}\vspace{-1mm}
\label{sect:signal_model}
Consider the sound field $y(t,\mathbf{r}_s, \mathbf{r}_r)$ in a room at time $t\in \mathbb{R}$ and position $\mathbf{r}_r\in \mathbb{R}^3$, generated by a  source emitting the signal $x(t)$ at position $\mathbf{r}_s$. The sound field at a position in the room may be described in terms of the RIR $h(t,\mathbf{r}_s, \mathbf{r}_r)$, such that 
\begin{equation}
    y(t,\mathbf{r}_s, \mathbf{r}_r) = h(t,\mathbf{r}_s, \mathbf{r}_r)\ast x(t),
\end{equation}
where the RIR $h(t,\mathbf{r}_s, \mathbf{r}_r)$ models the propagation of the source signal and $\ast$ denotes the convolution operator. Generally, the room RIR is determined by the surface geometry and materials, as well as properties of the propagation medium, i.e., the temperature and humidity of the air. Specifically, consider a discrete RIR, represented in terms of amplitude-delay tuples $\mathbf{h} = \{(o_k,\tau_k) \}_k$, where we here omit the notation of the source and receiver positions for notational brevity. Then, the direct sound field contributes to the tap with a delay that reflects the distance between the source and receiver, i.e., 
\begin{equation}
    \tau_{direct} = \frac{||\mathbf{r}_r-\mathbf{r}_s||_2}{c},
\end{equation}
where $c$ denotes the speed of the sound propagation.
Subsequent components of $\bh$ correspond to delays resulting from a sequence of reflections on room boundaries, as well as objects in the room. That is, a delay $\tau_k$ results from a sequence of $I$ reflections on reflectors at positions $\{\mathbf{q}_i\}_{i\in[1,\hdots,I]}$, where $\mathbf{q}_i\in \mathbb{R}^3$ according to
\begin{equation}
    \tau_{k} \!=\! \frac{1}{c}\!\left(\!\|\mathbf{r}_s-\mathbf{q}_1\|_2 \!+\! \|\mathbf{q}_{I}-\mathbf{r}_r\|_2 \!+\! \sum_{i = 1}^{I-1} \|\mathbf{q}_{i}-\mathbf{q}_{i+1}\|_2 \!\right).
    \label{eq:delay_reflection}
\end{equation}
Consequently, perturbations of any assumptions on sound speed, room geometry, and the source and receiver positions introduce a perturbation of the delay $\tau_k$, whereas deviations from the assumptions of the reflection properties instead can be expected to affect the amplitude $o_k$. While there are various methods for simulating an RIR from a room model, one may thus expect  errors in both the delay and amplitude for each reflection, as well as in the number of  reflections, when the room model is an approximation of a real room. Figure~\ref{fig:illustration} shows an illustrative example, where the early part of a measured RIR from the SMARD data set \cite{Nielsen2014} is shown alongside a simulated RIR using the ISM \cite{Allen1979}, illustrating the noted discrepancy in both delays and amplitudes. 

\section{Method}\vspace{-1mm}
\label{sect:method}

In this work, we consider the problem of estimating an RIR from measured signals, beneficially incorporating  a simulated RIR from an approximate room model. 
In the following, we consider a discrete-time setting, with a finite-length RIR $\mathbf{h}\in \mathbb{R}^{N_h}$.
Then, given an input signal $\mathbf{x}\in \mathbb{R}^N$, the signal recorded at the receiver, where we for ease of notation drop the dependency on $\mathbf{r}_s$ and $\mathbf{r}_r$, may be expressed as 
\begin{equation}
    \mathbf{y} = \mathbf{x}\ast \mathbf{h} + \mathbf{e},
    \label{eq:signal_model}
\end{equation} 
where $\mathbf{e}$ denotes an additive noise term. For well-posed settings, i.e., when the signals $\mathbf{x}$ and $\mathbf{y}$ are long, have good spectral excitation, and high SNR, the estimation problem may be posed as a least squares problem such that 
\begin{equation}
    \minwrt[\mathbf{h}] \quad \frac{1}{2} ||\mathbf{y}-\mathbf{h}\ast \mathbf{x}||_2^2.
    \label{eq:LS}
\end{equation}
However, for ill-posed settings, i.e., when, for example, the input signal is sparse in the frequency domain, the problem in \eqref{eq:LS} has to be regularized with some prior information to provide a unique solution, or to improve the conditioning of the problem, such that the problem may  be expressed as 
\begin{equation}
    \minwrt[\bh] \quad \frac{1}{2} ||\mathbf{y}-\bh\ast \mathbf{x}||_2^2+ \eta \mathcal{R}(\bh),
    \label{eq:LS_reg}
\end{equation}
where $\mathcal{R}: \mathbb{R}^{N_h}\rightarrow \mathbb{R}$ is a regularization function. 
Here, letting $\mathcal{R}$ be the (squared) $\ell_2$-norm corresponds to the standard Tikhonov regularization (see, e.g. \cite{Waterschoot2008, Ljung2020}). In the context of RIR estimation, the $\ell_1$-norm has also been used, motivated by the assumed sparse delay structure \cite{CroccoB15_eusipco}.
As an alternative, we here consider the setting in which a prior RIR, $\mathbf{h}_0$, for instance generated by a simulation, is available.
As outlined in Section~\ref{sect:signal_model}, the geometrical model from which $\mathbf{h}_{0}$ is simulated from typically contains errors with respect to the true room geometries and reflection coefficients. The naive approach for using $\mathbf{h}_{0}$ as a prior would be to minimize the difference in terms of the $\ell_p$ norms in \eqref{eq:LS_reg}, i.e., using
\begin{equation}
    \mathcal{R}_{p,\mathrm{0}}(\bh) = ||\bh-\mathbf{h}_{0}||_p^p.
\end{equation}
Although this choice is sensible for errors in the simulated amplitudes, the simulated RIR also contains errors in the delay structure. This affects the structure of the support, i.e., the location of non-zero elements, of the RIRs, which is not suitably modeled using $\ell_p$ distance measures.

Herein, we propose to use the concept of OT in order to model these types of shifts, and specifically shifts in the energy structure of the RIR. For two non-negative vectors $\bnu_1 \in \RR_+^{N_1}$, $\bnu_2 \in \RR_+^{N_2}$, the discrete Monge-Kantorovich problem of OT \cite{Peyre2019,Cuturi2013} may be stated as
\begin{equation} 
\begin{aligned}
    \minwrt[\bM \in \RR_+^{N_1 \times N_2}]& \quad \langle \bC , \bM\rangle = \mathrm{trace}\left( \bC^T \bM \right)
    \\\text{s.t. }& \quad \bM \onevec_{N_2} = \bnu_1 \;,\; \bM^T \onevec_{N_1} = \bnu_2,
\end{aligned}
\label{eq:ot_problem}
\end{equation}
where $\onevec_{N_1}$ and $\onevec_{N_2}$ are vectors of all ones of length $N_1$ and $N_2$, respectively. Here, the matrix $\bC \in \RR^{N_1 \times N_2}$ describes the cost of transporting mass between the different elements of $\bnu_1$ and $\bnu_2$. The corresponding optimal $\bM$ is the so-called transport plan describing how mass is moved between $\bnu_1$ and $\bnu_2$. In the context of RIR estimation, the minimal objective of \eqref{eq:ot_problem} has been used as a measure of distance for tracking time-varying RIRs \cite{Sundstrom2024ICASSP}, with the elements of the cost matrix being defined as $[\bC]_{k,\ell} = (\tau_k^{(1)} - \tau_\ell^{(2)})^2$, i.e., corresponding to delay discrepancies.
As may be noted, \eqref{eq:ot_problem} requires that $\bnu_1$ and $\bnu_2$ are non-negative. Furthermore, this is a linear program that, when used in the inverse problem setting considered herein, will be computationally cumbersome to solve. In order to construct a regularizing function applicable to RIRs (which can have arbitrary sign structure) as well as amenable to efficient solution, we instead propose to use $\otdist(\cdot,\mathbf{h}_{0}): \RR^{N_h} \to \RR$, defined as
\begin{equation}
\begin{aligned}
    \otdist(\mathbf{h},\mathbf{h}_0) = \minwrt[\bM \in \RR_+^{N_h \times N_h}] \quad & \langle \mathbf{C},\mathbf{M}\rangle + \epsilon D(\mathbf{M})\\
    \text{s.t. } & \mathbf{M}\mathbf{1} = \mathbf{h}_{0}^2\;,\; \mathbf{M}^T\mathbf{1} \geq \mathbf{h}^2,
\end{aligned}
\label{eq:ot_regularizer}
\end{equation}
where $D(\bM) = \sum_{k,\ell} [\bM]_{k,\ell} \log[\bM]_{k,\ell} - [\bM]_{k,\ell} + 1$ is an entropic regularization term, with $\epsilon> 0$, and where powers and inequalities are evaluated element-wise. Thus, $\otdist(\bh,\mathbf{h}_0)$ measures the effort required to rearrange the energy profile\footnote{In fact, as $\epsilon \to 0^+$, the minimal value of \eqref{eq:ot_regularizer} converges to that of the corresponding non-entropy-augmented problem \cite{Cuturi2013}.} of $\bh$ as to match that of $\mathbf{h}_0$. The relaxation of equality to inequality of the second constraint is done to make $\otdist(\cdot,\mathbf{h}_0)$ a convex function, enabling an efficient implementation. It also enables the estimated RIR to have a different total energy than the simulated RIR, which is not the case for the traditional OT problem in \eqref{eq:ot_problem}. 
Using this measure, the sought RIR, $\bh$, is estimated as
\begin{equation}
    \hat{\bh} = \argminwrt[\bh\in \RR^N_h]\quad \frac{1}{2}||\mathbf{y}- \bh \ast \mathbf{x} ||_2^2  + \eta \otdist(\bh,\mathbf{h}_0),
    \label{eq:prob}
\end{equation}
where $\eta>0$ denoted a regularization parameter determining the trade-off between data fit and the trust in the prior RIR $\mathbf{h}_0$. It may be noted that in contrast to $\ell_p$-norms, the regularizer $\otdist(\cdot,\mathbf{h}_0)$ allows for the flexibility to incorporate further prior knowledge of an expected concentration of energy, as determined by $\mathbf{h}_0$. In particular, small perturbations in the delay structure can be exploited as information by $\otdist(\cdot,\mathbf{h}_0)$.
As \eqref{eq:prob} is a convex problem, it allows for an efficient implementation. Our proposed implementation is inspired by \cite{Karlsson2017} and employs a
forward-backward splitting, separating the objective into a differentiable and a "proxable" part \cite{Beck2009}. In particular, we let the data fit term and the OT regularizer be the differentiable and proxable parts, respectively. With this, we propose solving \eqref{eq:prob} using the proximal gradient scheme\footnote{It may be noted that the convergence rate of these iterations may be improved in a straight-forward manner by means of acceleration methods \cite{Beck2009}. Furthermore, as the gradient step only involves applying convolution and its adjoint, it can be implemented using the Fast Fourier Transform \cite{Jalmby2023}.}
\begin{align*}
    \bh^{(j+1)} &= \mathrm{prox}_{\gamma\eta \otdist(\cdot,\bhgeo)}\left( \bh^{(j)} - \gamma\nabla_{\bh} \left(\frac{1}{2}||\mathbf{y}-\mathbf{X}\bh^{(j)} ||_2^2 \right) \right)
    \\&= \mathrm{prox}_{\gamma\eta \otdist(\cdot,\bh_{\mathrm{0}})}\left( \bh^{(j)} - \gamma \mathbf{X}^T (\mathbf{X}\bh^{(j)} - \mathbf{y}) \right),
\end{align*}
where $j$ denotes the iteration number, $\gamma > 0$ the stepsize, $\mathbf{X}$  the equivalent matrix representation of the convolution operator, and $\mathrm{prox}_{\gamma\eta \otdist(\cdot,\bhgeo)}$  the proximal operator for $\gamma\eta \otdist(\cdot,\bhgeo)$. We here set the stepsize as $\gamma = 1/L$, where $L = \| \mathbf{X} \|^2$ is the Lipschitz constant for the data fit term, and $\|\cdot \|$ denotes the operator norm.
Furthermore, the proximal operator is given by the following proposition.
\begin{proposition}
    For any $\theta > 0$, the proximal operator for $\theta \otdist(\cdot,\bhgeo):\RR^{N_h} \to \RR$  is unique and given by
\begin{align*}
    \mathrm{prox}_{\theta \otdist(\cdot,\bhgeo)}(\bu) = \bu \oslash \left( 2\bmu + \onevec \right),
\end{align*}
    where $\onevec \in \RR^{N_h}$ is a vector of all 1's, $\oslash$ denotes elementwise division,  $\otimes$ is the Kronecker product, and $\bmu \in \RR_+^{N_h}$ solves
\begin{equation} \label{eq:prox_dual_problem}
    \begin{aligned}
        \minwrt[\bmu \in \RR_+^{N_h} \:,\: \blambda \in \RR^{N_h}]\!\! \theta \epsilon \left\langle \bK , \bv \otimes \bw\right\rangle 
        \!-\! \langle \bhgeo^2, \blambda \rangle \!-\! \langle \bu^2 , \bmu\oslash (\onevec + 2\bmu) \rangle,
    \end{aligned}
\end{equation}    
     where $\bw = \exp\left( \frac{1}{\theta \epsilon} \bmu \right)$, $\bv = \exp\left( \frac{1}{\theta \epsilon} \blambda \right)$, $\bK = \exp\left( -\frac{1}{\epsilon} \bC \right)$.
     Here, all exponentiation and powers are element-wise.
\end{proposition}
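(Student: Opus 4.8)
The plan is to recognize the proximal operator as a single jointly convex program and then dualize the transport constraints. By definition,
\[
\mathrm{prox}_{\theta\otdist(\cdot,\bhgeo)}(\bu) = \argmin_{\bh\in\RR^{N_h}}\; \tfrac{1}{2}\|\bh-\bu\|_2^2 + \theta\,\otdist(\bh,\bhgeo),
\]
and since $\otdist(\bh,\bhgeo)$ is itself the minimum over $\bM\in\RR_+^{N_h\times N_h}$ in \eqref{eq:ot_regularizer}, the right-hand side is a joint minimization over $(\bh,\bM)$ subject to $\bM\onevec=\bhgeo^2$ and $\bM^T\onevec\geq\bh^2$. First I would note that this objective is $1$-strongly convex in $\bh$ through the quadratic term, while $\otdist(\cdot,\bhgeo)$ is convex (it is a partial minimization of a jointly convex function over the convex feasible set, the constraint $h_\ell^2-(\bM^T\onevec)_\ell\leq 0$ being convex). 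This already settles the uniqueness claim; it remains only to identify the minimizer.

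Next I would attach a multiplier $\blambda\in\RR^{N_h}$ to the equality $\bM\onevec=\bhgeo^2$ and a multiplier $\bmu\in\RR_+^{N_h}$ to the inequality $\bh^2-\bM^T\onevec\leq 0$, forming the Lagrangian
\[
L = \tfrac{1}{2}\|\bh-\bu\|_2^2 + \theta\langle\bC,\bM\rangle + \theta\epsilon D(\bM) - \langle\blambda,\bM\onevec-\bhgeo^2\rangle + \langle\bmu,\bh^2-\bM^T\onevec\rangle .
\]
The inner minimization over $(\bh,\bM)$ then decouples. The $\bh$-part is the separable unconstrained quadratic $\sum_k[\tfrac12(h_k-u_k)^2+\mu_k h_k^2]$, whose stationarity condition $h_k(1+2\mu_k)=u_k$ gives exactly $\bh=\bu\oslash(\onevec+2\bmu)$, the claimed prox formula, and whose optimal value simplifies to $\langle\bu^2,\bmu\oslash(\onevec+2\bmu)\rangle$. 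The $\bM$-part is separable across entries, and setting the derivative of $\theta C_{k\ell}M_{k\ell}+\theta\epsilon(M_{k\ell}\log M_{k\ell}-M_{k\ell}+1)-\lambda_k M_{k\ell}-\mu_\ell M_{k\ell}$ to zero yields the Gibbs form $M_{k\ell}=K_{k\ell}\,v_k\,w_\ell$ with $\bK,\bv,\bw$ exactly as defined; back-substituting collapses the entropic term so that the $\bM$-part contributes $-\theta\epsilon\langle\bK,\bv\otimes\bw\rangle$ up to an additive constant $\theta\epsilon N_h^2$.

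Collecting the three contributions and negating (to pass from the dual maximization to a minimization) produces precisely \eqref{eq:prox_dual_problem}, with the constant discarded. The remaining step is to justify that the primal minimizer $\bh$ is recovered from a solution $\bmu$ of this dual. For this I would invoke strong duality: the problem is convex, and a strictly feasible point exists — take $\bM$ in the interior of $\RR_+^{N_h\times N_h}$ (where $D$ is differentiable) with row sums matching $\bhgeo^2$ and column sums made strictly larger than $\bh^2$ — so Slater's condition holds, there is no duality gap, and the dual optimum is attained. Strong duality then guarantees that the stationarity relation $\bh=\bu\oslash(\onevec+2\bmu)$ holds at the optimal $\bmu$, completing the identification.

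I expect the main obstacle to lie in the constraint qualification and the min-max exchange rather than in the calculus. One must confirm that the entropic regularization keeps the optimal $\bM$ in the open positive orthant, so that the stationarity equations are valid and $\bK,\bv,\bw$ are well defined, and that strict feasibility of the inequality guarantees a finite dual optimizer $\bmu$. The other delicate point is the sign bookkeeping in the back-substitution: ensuring that the transported entropic mass enters \eqref{eq:prox_dual_problem} with a $+\theta\epsilon$ coefficient once the dual is rewritten as a minimization, and that $\bhgeo^2$ and $\bu^2$ carry the signs shown. Both are routine once the Lagrangian sign conventions are fixed.
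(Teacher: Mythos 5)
Your proposal is correct and follows essentially the same route as the paper's proof: dualize the two transport constraints with multipliers $(\blambda,\bmu)$, minimize the Lagrangian in closed form over $\bh$ (giving $\bh=\bu\oslash(\onevec+2\bmu)$) and over $\bM$ (giving the Gibbs form $\bM=\mathrm{diag}(\bv)\bK\,\mathrm{diag}(\bw)$), then back-substitute and negate to obtain \eqref{eq:prox_dual_problem}. The only difference is that you make explicit the convexity/uniqueness argument and the Slater-type constraint qualification that the paper leaves implicit, which is a welcome but minor strengthening.
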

\begin{proof}
    See appendix.
\end{proof}
The proximal operator does not have an analytical solution and has to be computed using an iterative scheme solving \eqref{eq:prox_dual_problem}. We propose to address this using block-coordinate descent, with the blocks corresponding to $\bmu$ and $\blambda$. In particular, in iteration $k$, the updates are given by (see the appendix)
\begin{align*}
    \blambda^{(k)} &= \theta \epsilon\left( \log \bhgeo^2 -  \log\left( \bK \bw^{(k-1)} \right)\right),
    \\\bmu^{(k)} &= 2\theta \epsilon\left( \omega\left( \bxi^{(k)} \right) - \frac{1}{4\theta\epsilon} \onevec\right)_+,
\end{align*}
where $\omega(\cdot)$ denotes the (element-wise) Wright omega function \cite{Corless2002}, $(\cdot)_+$ element-wise truncation at zero, and where %we use the short-hand
\begin{align*}
    \bxi^{(k)} = \left(\frac{1}{4\theta\epsilon} -\log(4\theta\epsilon)\right)\onevec + \frac{1}{2} \log \bu^2 - \frac{1}{2}\log \bK^T \bv^{(k)}.
\end{align*}
As the dual problem \eqref{eq:prox_dual_problem} satisfies the assumptions of  \cite[Theorem 2.1]{luo1992convergence}, the iterates converge linearly to the solution of \eqref{eq:prox_dual_problem}.
It may be noted that the scheme can be warm-started by using the previous optimal pair $(\bmu,\blambda)$ as the initial point of the iterations.
Empirically, we observe fast convergence of the proposed scheme. 
\section{Numerical experiments}\vspace{-1mm}
We proceed to evaluate the proposed method on both simulated and measured RIRs from the SMARD data set \cite{Nielsen2014}. The proposed method, using $\epsilon = 0.1$, is compared to the state-of-the-art methods described in Section~\ref{sect:method}, i.e., Tikhonov and Lasso regularization, and using the $\ell_2$ and $\ell_1$ distance to the simulated RIR as regularization. The regularization parameter $\eta$ is for all methods set by cross-validation with 30 logarithmically spaced values in the range $10^{-6}$ to $10^{6}$. To evaluate the performance of the estimated RIRs, we use the NMSE, defined as
\begin{equation}
    \text{NMSE} = \sum_{k=1}^K \frac{||\hat{\mathbf{h}}_k \ast \mathbf{z} - \mathbf{h}_k \ast \mathbf{z}||_2^2}{||\mathbf{h}_k \ast \mathbf{z}||_2^2},
    \label{eq:NMSE}
\end{equation}
where $\hat{\mathbf{h}}_k$ denotes the estimated RIR, $\mathbf{h}_k$ the true RIR, $K$ the number of realizations of the numerical experiment, and $\mathbf{z}$ a low-pass filter with cut-off frequency $
3000$~Hz introduced to avoid small deviation in the delays to cause magnitude contributions to the NMSE (see also for example \cite{Geldert2023}). 

\begin{figure}
    \centering
    \includegraphics[width = \linewidth]{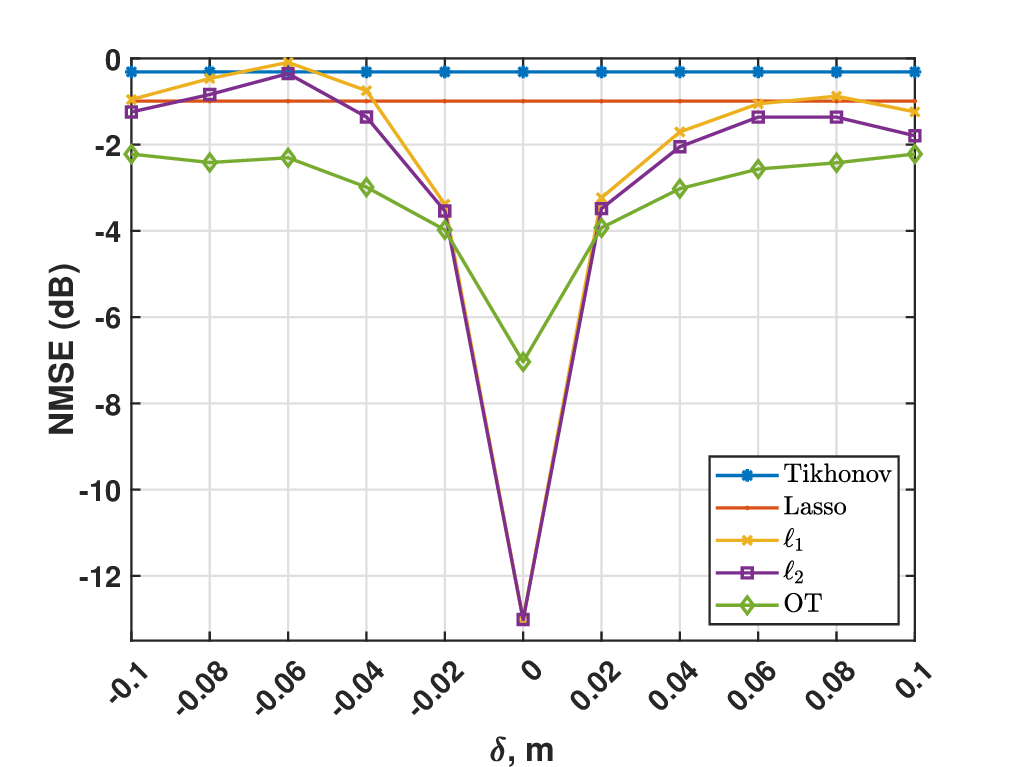}
    \caption{Robustness %with respect 
    to errors in the room dimensions in the simulated RIR $\mathbf{h}_{0}$ using signals generated from the ISM.}
    \label{fig:robustness_room_dims}\vspace{-1mm}
\end{figure}

We begin by evaluating the robustness to model errors in the simulated RIR, $\mathbf{h}_{0}$, for simulated RIRs. The observed signals are generated by a simulated RIR of length $600$ at the sampling frequency $8$~kHz using the ISM as implemented in \cite{Govern2009}, with the reflection coefficient $0.5$, room dimensions $7\times5\times3$~m, the temperature $19.6$~$^\circ$C, and with a source positioned at $[5,4,1]$. The performance is averaged over $K=10$ microphone positions randomly generated in a cube of side length $1$~m centered at $[2,2,1.5]$. 
As a source signal a $12.5$~ms long section of a real speech recording is used, where a new section of the recording is used for each new microphone position. Furthermore, white Gaussian noise is added to the microphone signal to achieve a signal-to-noise ratio (SNR) of $5$, with the SNR defined as 
$\mathrm{SNR} = 10\log_{10}\left( \sigma^2_{\mathrm{signal}} / \sigma^2_{\mathrm{noise}} \right),$
where $\sigma^2_{\mathrm{signal}}$ and $\sigma^2_{\mathrm{noise}}$ denote the power of the signal $\mathbf{x} * \mathbf{h}$ and the noise $\mathbf{e}$, respectively.
The least squares problem in \eqref{eq:LS} is thus ill-conditioned both in terms of the short signal length, poor spectral excitation of the speech signal, and its low SNR. 

In practice, for the applications outlined in Section~\ref{sect:introduction}, errors in both the temperature and the room geometry are inevitable. In Figure~\ref{fig:robustness_room_dims}, the robustness with respect to errors in the room dimensions are illustrated, where $\mathbf{h}_{0}$ is identical to $\mathbf{h}$ except for an additive perturbation $\delta$ in each room dimension for $11$ equally spaced values of $\delta$ in the range $-0.1$ to $0.1$~m. Similarly, Figure~\ref{fig:robustness_temperature} illustrates the robustness to errors in the temperature of $\mathbf{h}_{0}$ for $11$ equally spaced values of the temperature in the range $-14.6$ to $24.6$~$^\circ$C. For an ideal $\mathbf{h}_0$, i.e., when $\delta$ is $0$~m in Figure~\ref{fig:robustness_room_dims}, and the temperature is $19.6$~$^\circ$C in Figure~\ref{fig:robustness_temperature}, the $\ell_1$ and $\ell_2$ methods have the lowest NMSE. However, in both cases it is clear that the proposed method is more robust to errors in $\mathbf{h}_{0}$ as compared to the $\ell_1$ and $\ell_2$ regularizers. 
While the simulated experiments illustrated in Figures~\ref{fig:robustness_room_dims} and \ref{fig:robustness_temperature} isolate one type of error at the time, in a real setting one may expect different kinds of errors to simultaneously influence the estimation.
\begin{figure}
    \centering
    \includegraphics[width = \linewidth]{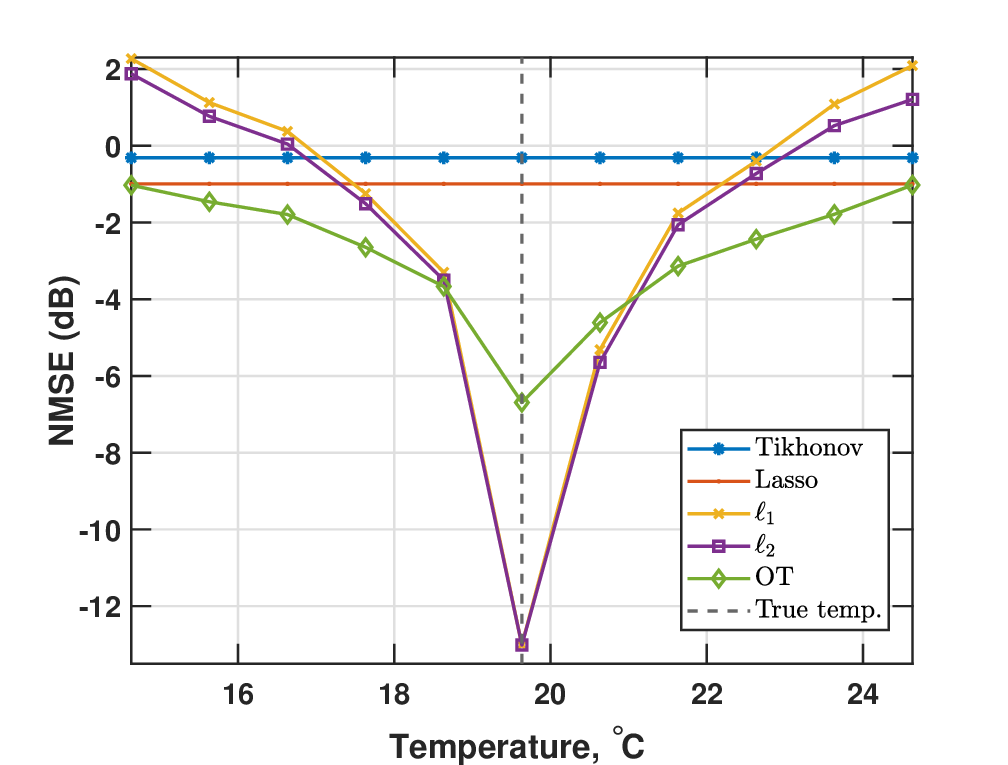}
    \caption{Robustness %with respect 
    to errors in the temperature in the simulated RIR $\mathbf{h}_{0}$ using signals generated from the ISM.}
    \label{fig:robustness_temperature}\vspace{-1mm}
\end{figure}
Finally, we estimate RIRs from the SMARD data set \cite{Nielsen2014} to validate the performance for a realistic scenario. We use RIRs downsampled to $8$~kHz from the subset $1002$, including RIRs to microphones in three linear arrays. As the simulated RIR, we use what could be considered as the most simplistic simulation, i.e., the ISM in \cite{Govern2009}, using the room dimension, temperature source position, and microphone position documented in the data set. While the reflection coefficient on the other hand is both unknown and in practice varying for every surface, we set it somewhat arbitrary to $0.3$ to reflect the short reverberation time of $0.15$~s documented in the data set. As illustrated in Figure~\ref{fig:illustration}, the simulated RIR is a clear simplification  of the real RIR.
The input signal is similar to the one used above and is observed with a SNR of $15$~dB, with the performance being measured as the average over $10$ realizations of microphone positions and sections of the speech signal. The robustness with respect to the choice of temperature in the simulation model is illustrated Figure~\ref{fig:temperature_SMARD}, where the proposed method has the lowest NMSE, even for large errors in the temperature. We also confirm that the results of both the $\ell_1$ and $\ell_2$ methods are not meaningful, indicating that also other types of errors are present.
\begin{figure}[t]
    \centering
\includegraphics[width=\linewidth]{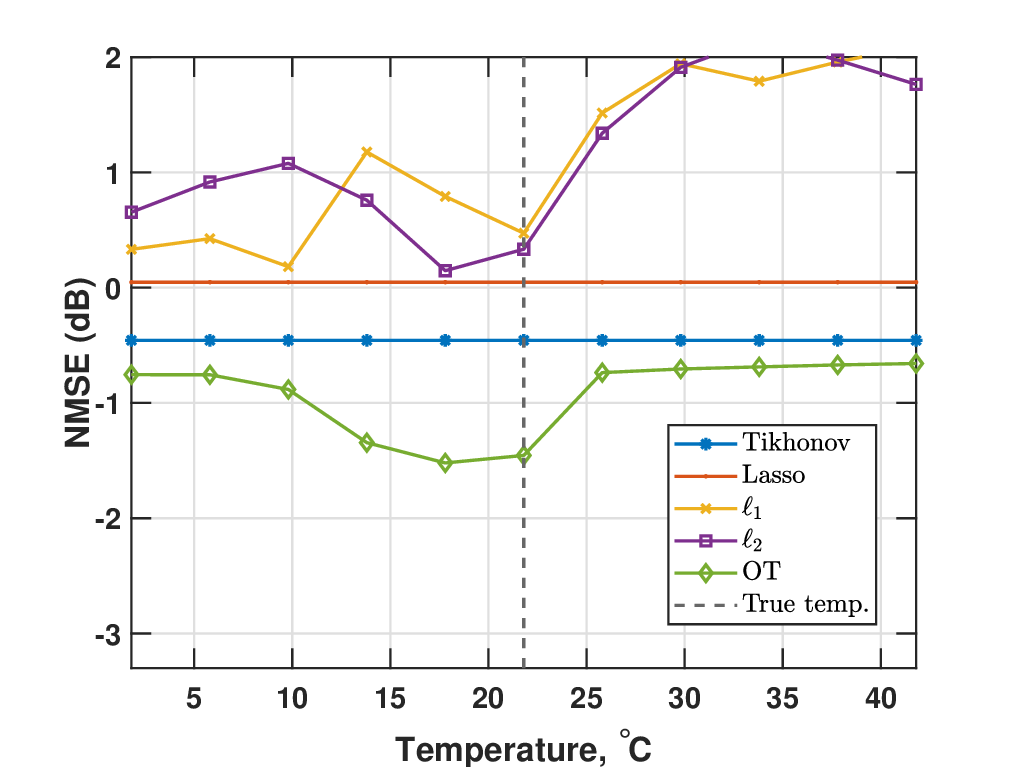}
    \caption{Robustness with respect errors in the temperature in the simulated RIR $\mathbf{h}_{0}$ using signals generated from the SMARD data set.}
    \label{fig:temperature_SMARD}\vspace{-1mm}
\end{figure}
\section{Conclusion}\vspace{-1mm}
In this work, we consider the problem of estimating an RIR using ambient signals, such as speech and music, when approximate knowledge of the delay structure of the RIR is available. We employ an optimal transport regularization technique to allow for differences in the delay structure 
and propose an efficient numerical solver for the resulting estimator.
Using simulated and measured data, it is shown that the proposed method is able to beneficially incorporate information from even a simple simulation model, yielding robustness to errors in both the room dimensions and temperature. 

\appendix
\section{Proof of Proposition 1}
\begin{proof}
The proximal operator for $\theta \otdist(\cdot,\bhgeo)$ is defined as
    \begin{equation*}
        \text{prox}_{\theta \otdist(\cdot, \bhgeo)}(\mathbf{u}) = \argmin_{\mathbf{h}\in \mathbb{R}^{N_h}}  \theta \otdist(\bh, \bhgeo) + \frac{1}{2} ||\mathbf{u}-\mathbf{h} ||_2^2,
        %\label{eq:prox}
    \end{equation*}
    where
    \begin{equation*}
    \begin{aligned}
        \otdist(\mathbf{h},\mathbf{h}_0) = \minwrt[\bM \in \RR_+^{N_h \times N_h}] &\quad \langle \mathbf{C},\mathbf{M}\rangle + \epsilon D(\mathbf{M})\\
        \text{s.t. } &\quad \mathbf{M}\mathbf{1} = \mathbf{h}_{0}^2\;,\; \mathbf{M}^T\mathbf{1} \geq \mathbf{h}^2.
    \end{aligned}
    %\label{eq:ot_regularizer}
    \end{equation*}
That is, the proximal operator solves
\begin{equation} \label{eq:full_problem}
    \begin{aligned}
        \minwrt[\mathbf{h} \in \RR^{N_h}, \mathbf{M} \in \RR^{N_h\times N_h}]& \quad\theta \langle \mathbf{C},\mathbf{M}\rangle + \theta\epsilon D(\mathbf{M})) + \frac{1}{2} ||\mathbf{u}-\mathbf{h} ||_2^2
        \\\text{s.t. } &\quad \mathbf{M}\mathbf{1} = \mathbf{h}_{0}^2\;,\; \mathbf{M}^T\mathbf{1} \geq \mathbf{h}^2.
    \end{aligned}
\end{equation}
The Lagrangian of \eqref{eq:full_problem} is given by
    \begin{align*}
        \mathcal{L}(\mathbf{M},\mathbf{h},\blambda,\bmu) &= \theta\langle \mathbf{C},\mathbf{M}\rangle + \theta \epsilon D(\mathbf{M}) + \left\langle \blambda, \bhgeo^2 - \bM \mathbf{1} \right\rangle \\
        &+ \left\langle \bmu, \bh^2 - \bM^T \mathbf{1} \right\rangle \notag + \frac{1}{2} \|\bu - \bh\|_2^2, \notag
    \end{align*}
    where $\bmu\in \mathbb{R}^{N_h}_+$ and $\blambda\in \mathbb{R}^{N_h}$ are the dual variables. It may be readily verified that the Lagrangian is strongly convex in $\bh$ and $\bM$ with the unique minimizer
    \begin{align*}
        \bh &= \bu \oslash (2\bmu + \mathbf{1}), %\label{eq:h}
        \\\bM &= \text{diag}(\bv) \bK \text{diag}(\bw) = \bK \odot (\bv \otimes \bw), %\label{eq:M}
    \end{align*}
where $\odot$ is the Hadamard product. Plugging this into the Lagrangian yields the dual problem
    \begin{equation} \label{eq:dual_problem}
    \begin{aligned}
        &\maxwrt[\blambda \in \mathbb{R}^{N_h}, \bmu \in \mathbb{R}_+^{N_h}] \left\langle \blambda,\bhgeo^2 \right\rangle + \frac{1}{2}
        \left\| \bu \oslash (2\bmu + \boldsymbol{1}) - \bu \right\|^2_2\\
        &+ \left\langle \bmu,\bu^2 \oslash (2\bmu + 1)^2 \right\rangle - \epsilon \theta \bw^T \bK \bv + \epsilon \theta N_h^2.
    \end{aligned}
    \end{equation}
    Simplifying and omitting constant terms, we arrive at the minimization problem
    \begin{equation*}%\label{eq:prop_problem}
    \begin{aligned}
        \minwrt[\blambda \in \mathbb{R}^{N_h}, \bmu \in \mathbb{R}_+^{N_h}] \theta \epsilon \left\langle \bK , \bv \otimes \bw\right\rangle - \langle \bhgeo^2, \blambda \rangle \!-\! \langle \bu^2 , \bmu\oslash (\onevec + 2\bmu) \rangle,
    \end{aligned}
    \end{equation*}
which has the same solution as \eqref{eq:dual_problem}.
\end{proof}

\section{Block coordinate descent iterations}
\begin{proof}
Keeping $\bmu$ fixed, minimizing \eqref{eq:dual_problem} with respect to $\blambda$ is equivalent to solving
\begin{align*}
    \minwrt[\blambda \in \RR^{N_h}] \quad \theta \epsilon \langle \bK \bw, \bv \rangle - \langle \bhgeo^2,\blambda \rangle,
\end{align*}
where $\bw = \exp\left( \frac{1}{\theta \epsilon} \blambda \right)$. Thus, the optimal $\blambda$ is found by solving the zero gradient equations,
\begin{align*}
    \bv \odot \bK \bw - \bhgeo^2 = 0
\end{align*}
yielding
\begin{align*}
    \blambda = \theta \epsilon \left( \log \bhgeo^2 - \log \bK \bw \right).
\end{align*}
Keeping $\blambda$ fixed, minimzing with respect to $\bmu$ is equivalent to solving
\begin{align*}
    \minwrt[\bmu \in \RR_+^{N_h}] \quad \theta \epsilon \langle \bK^T \bv, \bw \rangle - \langle \bu^2 , \bmu\oslash (\onevec + 2\bmu) \rangle,
\end{align*}
with $\bw = \exp\left( \frac{1}{\theta \epsilon} \bmu \right)$. As may be noted, this problem decouples in the individual component of $\bmu$ and can this be solved for each element separately. Let $\mu$ be a components of $\bmu$, and let $q$ and $u$ be corresponding elements of $\mathbf{q} = \bK^T\bv$ and $\bu$. This yields the problem
\begin{align*}
    \minwrt[\mu \geq 0] \quad f(\mu) =  \theta \epsilon \exp\left( \frac{1}{\theta \epsilon} \mu\right) q - u^2  \frac{\mu} {1 + 2\mu}.
\end{align*}
As this problem is convex for $\mu \geq 0$, it follows directly that the minimizer $\mu^\star$ is given as
\begin{align*}
    \mu^\star = \max\left( 0, \mu_0\right) = \left( \mu_0\right)_+,
\end{align*}
where $\mu_0$ is a root of the derivative of $f$. This derivative is given by
\begin{align*}
    f'(\mu) = \exp\left( \frac{1}{\theta \epsilon} \mu \right) q - u^2 \frac{1}{(1+2\mu)^2}.
\end{align*}
Setting this to zero yields
\begin{align*}
    &\quad\exp\left( \frac{1}{\theta \epsilon} \mu_0 \right) q=  u^2 \frac{1}{(1+2\mu_0)^2} \\\iff&\quad
\frac{1}{2\theta \epsilon}\mu_0  +\log(1+2\mu_0)= \frac{1}{2}(\log u^2 - \log q),
\end{align*}
under the assumption $\mu_0 > -1/2$. Adding $1/(4\theta) - \log(4\theta)$ to both sides of this equation yields
\begin{align*}
    \frac{1}{4\theta \epsilon} + \frac{1}{2\theta \epsilon}\mu_0 + \log\left( \frac{1}{4\theta \epsilon} + \frac{1}{2\theta \epsilon}\mu_0\right) = \xi
\end{align*}
where
\begin{align*}
   \xi =  \frac{1}{4\theta \epsilon} - \log(4\theta \epsilon) + \frac{1}{2}(\log u^2 - \log q),
\end{align*}
and thus
\begin{align*}
    \frac{1}{4\theta \epsilon} + \frac{1}{2\theta \epsilon}\mu_0 + \log\left( \frac{1}{4\theta \epsilon} + \frac{1}{2\theta\epsilon}\mu_0\right) = \omega(\xi)
\end{align*}
where $\omega(\cdot)$ is the Wright omega-function, i.e., the function $\omega: \RR \to \RR_+$ mapping $x$ to $\omega(x)$ such that $\omega(x) + \log \omega(x) = x$. From this, we can conclude that
\begin{align*}
    \mu_0 = 2\theta \epsilon\left( \omega(\xi) - \frac{1}{4\theta\epsilon}\right)
\end{align*}
is the (unique) root of the derivative. Thus,
\begin{align*}
    \mu^\star = 2\theta\epsilon\left( \omega(\xi) - \frac{1}{4\theta\epsilon}\right)_+,
\end{align*}
which when applied element-wise yields the optimal $\bmu$ as
 \begin{align}
        \bmu &= 2\theta \epsilon\left( \omega\left( \bxi \right) - \frac{1}{4\theta\epsilon} \onevec\right)_+.
    \end{align}
As the objective function satisfies the assumptions of \cite[Theorem 2.1]{luo1992convergence}, such as, e.g., strong convexity, the iterates constructed by alternatingly minimizing with respect to $\blambda$ and $\bmu$ converges linearly to the solution of \eqref{eq:prox_dual_problem}.
\end{proof}

\bibliographystyle{IEEEbib}
\bibliography{eusipco2024_arxiv.bbl}

\end{document}